\newtheorem{theorem}{Theorem}[section]
\newtheorem{proposition}[theorem]{Proposition}
\newtheorem{definition}[theorem]{Definition}
\newtheorem{remark}[theorem]{Remark}
\newtheorem{example}[theorem]{Example}
\newenvironment{proof}{\mbox{\bf Proof.}}{\mbox{$\dashv$}\bigskip}
\begin{document}
    
    \begin{center}

        {\Large\bf   A Formal Axiomatization of Computation }  \\ ~\\

        {\bf  Rasoul Ramezanian}\\
        {\bf Ferdowsi University of Mashhad, Iran}\\ {rramezanian@um.ac.ir}\\

    \end{center}

\begin{abstract} 

\noindent We introduce an axiomatization  for the notion of computation. Based on the idea of Brouwer choice sequences, we construct a   model, denoted by $E$, which satisfies our axioms and $E \models \mathrm{ P \neq NP}$. 
In other words, regarding "effective computability" in Brouwer's intuitionism viewpoint, we show    $\mathrm{ P \neq NP}$.
\end{abstract}

\section{Introduction}
Is the famous problem $\mathrm{ P= NP}$ unprovable? To answer the question, we need an axiomatization for the notion of computation. In sections~2 and~3, we propose our setting and axiomatic system. 

To show that $\mathrm{ P = NP}$ is not derivable from our axiomatic systems,  we, in section~6, construct a model, denoted by $E$, which satisfies all of our axioms and does not satisfy  $\mathrm{ P = NP}$.

To construct our counter-model, $E$,  in section~4 and~5, we introduce non-predetermined functions (inspired by Brouwer choice sequences) and persistently evolutionary Turing machines as  an extension of Turing machines to compute non-predetermined functions. 

In computational complexity theory, the diagonal argument is used to show that two complexity classes are distinct. Nobody till now could be successful to use the diagonal argument to show $\mathrm{ P \neq NP}$.

In classic mathematics,  the diagonal argument is used to show that the size of the real numbers is  larger than the naturals. But in Brouwer intuitionism mathematics, instead of the diagonal argument, the principle of Bar induction~\cite{kn:TD} is used to show that there is no one-to-one correspondence between the natural numbers and the real numbers.

Our argument to show that $E \models \mathrm{ P \neq NP}$ is not the diagonal argument, and is in some sense similar to Bar induction principle. 
  
   In section~7, we argue that our axiomatic system plausibly formalizes ``natural computation"  similar to Peano axioms for ``natural numbers".

\section{Syntax and Definitions}
 A computation is a sequence of configurations that we transit from one to another by applying some instructions. The transitions are continued until a desired (an accept) configuration is obtained. In the following, we formally describe the notion of computation.

 Our syntax,  for explaining the notion of computation, consists of the followings
  
      \begin{itemize}
          \item[\emph{1.}] $\mathrm{INST}$ is a nonempty set called  the set of 
          \textit{``instructions"},
          
          \item[\emph{2.}] $\mathrm{CONF}$ is a nonempty set called the set of
          \emph{``configurations"} such that to each string $x\in \{0,1\}^*$,
          \begin{itemize}
              \item  a unique configuration $C_{0,x}\in \mathrm{CONF}$ is associated as ``the start
              configuration of the string $x$", and
              \item to each $C\in \mathrm{CONF}$,  a unique string $y_C\in
              \{0,1\}^*$ is associated as the string of involved in the configuration $C$, and also we have $y_{C_{0,x}}=x$ (see example~\ref{tce}).
          \end{itemize}

      \end{itemize}
  
      \begin{itemize}
          \item[\emph{3.}]  $\mathrm{TENG}$,  \emph{the ``transition   engine"}, is a total function
          from $\mathrm{CONF}\times\mathrm{ INST}$ to $\mathrm{CONF}\cup\{\bot\}$\footnote{ $\bot$ mean ``undefined".}.  
          
          \item[\emph{4.}] $\mathrm{AENG}$, \emph{the ``accepting   engine"},    is a total function
          from $\mathrm{CONF}$ to $\{YES, NO\}$.  
      \end{itemize}  
 For an example of the above syntax, one may see example~\ref{tce}.
  \begin{definition}\label{proced}
      ~\begin{itemize}
          
          \item[i.] \textbf{Procedures}.  A   \emph{procedure (an algorithm, a machine)}  is defined to be a finite  set
          $M\subseteq \mathrm{INST}$ \emph{(a finite set of instructions)}, satisfying the
          following condition 
          
          \begin{itemize} \item[-] \emph{The determination condition}: for
          every $C\in \mathrm{CONF}$,  at most there exists only one instruction $\tau$ in 
          $M$, denoted by notation  $\Upsilon(M,C)=\tau$, such that $\mathrm{TENG}(C,\tau)\in \mathrm{ CONF}$. If for all $\iota \in M$,
          $\mathrm{TENG}(C,\iota)=\bot$,   we denote $\Upsilon(M,C)=\emptyset$.
      \end{itemize}
          We refer to the
          set of all procedures by the symbol  $\Xi$.

           \item[ii.] \textbf{Languages}. A string $x\in  \{0,1\}^*$,  is in the
          \emph{language} of a  procedure $M$, denoted by $L(M)$, whenever we can
          construct a sequence $C_{0}C_{1},...,C_{n}$ of configurations in
          $\mathrm{CONF}$ such that
          \begin{itemize}\item $C_0=C_{0,x}$, \item each $C_i$, $i\geq 1$,
              is obtained by applying $\mathrm{TENG}$ on $(C_{i-1},\Upsilon(M,C_{i-1}))$,
              \item the $\mathrm{AENG}$ outputs $YES$ for $C_n$, and
              $\Upsilon(M,C_{n})=\emptyset$.
          \end{itemize}  The sequence $C_{0}C_{1},...,C_{n}$  is called the ``successful 
             computation path" of $M$ on $x$. The length of a
          computation path is the number of configurations appeared in.   
          
           \item[$iii.$] \textbf{Functions}. A partial function
          $f:\Sigma^*\rightarrow\Sigma^*$, $\Sigma=\{0,1\}$, is computed by a procedure $M\in
          \Xi$, whenever for $x\in \Sigma^*$,     we can construct a sequence
          $C_{0}C_{1},...,C_{n}$ of configurations in $\mathrm{CONF}$ such that
          \begin{itemize}\item $C_0=C_{0,x}$, \item each $C_i$, $i\geq 1$,
              is obtained by applying $\mathrm{TENG}$ on $(C_{i-1},\Upsilon(M,C_{i-1}))$,
              \item the $\mathrm{AENG}$ outputs $YES$ for $C_n$,   and 
              $\Upsilon(M,C_{n})=\emptyset$,
              
              \item $y_{C_n}=f(x)$.
          \end{itemize}
      
       \item[$iv.$] \textbf{Computation Path Length}. The   \emph{time
          complexity} of computing a  procedure $M$ on an input string
      $x$, denoted by $time_M(x)$,   is $n$, for some $n\in
      \mathbb{N}$, whenever   we can construct   a successful computation path
      of the  procedure $M$ on $x$ with length $n$.

      \item[$v.$] \textbf{Time Complexity}. Let $f:\mathbb{N}\rightarrow \mathbb{N}$  and
      $L\subseteq \Sigma^*$. The   time complexity
      of the computation of the language $L$  is less than $f$ whenever there exists a
      procedure  $M\in \Xi$ such that the language defined by the
      procedure $M$, i.e., $L(M)$, is equal to $L$, and for all
      $x\in L$, $time_M(x)<f(|x|)$.
      
       \item[$vi.$] \textbf{Complexity Classes.} We  define the time complexity class
      $\mathrm{P}\subseteq 2^{\Sigma^*}$ to be the set of all
      languages that we can compute   in polynomial time using $\mathrm{TENG}$ and $\mathrm{AENG}$. We
      also define the complexity class $\mathrm{NP}\subseteq
      2^{\Sigma^*}$ as follows:
      \begin{itemize}
          \item[] $L\in \mathrm{NP}$  iff there exists $J\in \mathrm{P}$
          and  a polynomial function $q$ such that for all $x\in
          \Sigma^*$,\begin{center} $x\in L\Leftrightarrow\exists y\in
              \Sigma^* (|y|\leq q(|x|) \wedge (x,y)\in J)$.\end{center}
      \end{itemize}
      \end{itemize} 
  \end{definition}
  
 \begin{remark}
 Definitions of computability and complexity classes stated in~\ref{proced} are not new and they are the same definitions appeared  in~\cite{kn:arora0} and~\cite{kn:mom0}.
 \end{remark} 

  \section{Axioms}
    
    In this section, we introduce the axioms of our setting. We only have 3 axioms.
      
      \begin{itemize} 
          
           \item[$A1.$] \textbf{Turing Computability and Complexity.} For every Turing machine $T$, there exists a procedure $M\in \Xi$ such that $L(M)=L(T)$ and the time complexity of $L(M)$ is equal to the time complexity of $L(T)$.

          \item[$A2.$] \textbf{Effective Computability of Engines }.   Both engines $\mathrm{TENG}$ and $\mathrm{AENG}$ are \textit{effectively computable}~(see~\ref{mechan}).
          \item[$A3.$]  \textbf{Time Complexity of Engines}.  Both engines $\mathrm{TENG}$ and $\mathrm{AENG}$ work in \textit{linear time}~(see~\ref{mechan2}).

      \end{itemize}
  In section~\ref{NCo}, we argue that these 3 axioms plausibly express the notion of ``natural computation". Axiom $A2$ and $A3$ reasonably express the attributes of the transition engine and the accepting engine.  We expect that both $\mathrm{TENG}$ and $\mathrm{AENG}$ are physically plausible and effectively computable, and just use linear time (clock) on configurations to determine the next configuration or accepting configurations.
  
\noindent In example~\ref{tce}, we introduce a   model, named $V$, which satisfies axioms $A1$, $A2$, and $A3$.
  
  \begin{example}\label{tce}       Let 
      \begin{itemize}
          \item[] $Q_T=\{h\}\cup\{q_i\mid   i\in \mathbb{N}\cup\{0\}\}$,
          \item[] $\Sigma,\Gamma$ be two finite set  with $\Sigma\subseteq \Gamma$
          and
          \item[]  $\Gamma$ has a symbol $\triangle \in \Gamma-\Sigma$.
      \end{itemize}

      \begin{itemize}
          \item[1)] $INST_v=\{[(q,a)\rightarrow(p,b,D)]\mid p,q\in Q_T,
          a,b\in \Gamma, D\in\{R,L\}\}$,

          %$[(q,a)\rightarrow(p,b,D)]\simeq %[(q',a')\rightarrow(p',b',D')]$
          %iff $q=q'$ and $a=a'$.
          
          \item[2)] $CONF_v=\{(q,x\underline{a}z)\mid q\in Q_T, x,z\in
          \Gamma^*, a\in \Gamma\}$,   for each $x\in \Sigma^*$,
          $C_{0,x}=(q_0,\underline{\triangle} x)$, and for each
          $C=(q,x\underline{a}z)\in CONF_v$, $y_C=xaz$.
          
      \end{itemize}
   
      \begin{itemize}
          
          \item[3)] Let $C=(q,xb_1\underline{a}b_2y)$ be an arbitrary
          configuration then
          
          \begin{itemize} \item $TENG_v(C,[(q,a)\rightarrow (p,c,R)])$ is defined to be $C'=(p,xb_1c\underline{b_2}y)$,
              
              \item $TENG_v(C,[(q,a)\rightarrow (p,c,L)])$ is defined to be
              $C'=(p,x\underline{b_1}cb_2y)$, and
              
              \item for other cases $TENG_v$ is defined to be $\bot$.
              
          \end{itemize} $TENG_v$ can be computed by a Turing machine in linear time.
          \item[4-] Let $C\in CONF_v$ be arbitrary
          \begin{itemize}
              \item if $C=(h,\underline{\triangle}x)$ then $AENG_v(C)$ is
              defined to be $YES$,
              
              \item if $C=(h,x\underline{\triangle})$ then $AENG_v(C)$ is
              defined to be $YES$, and
              
              \item otherwise $AENG_v(C)$ is defined to be $NO$.
          \end{itemize} $AENG_v$ can be computed by a Turing machine in linear time.
          
          \item[5-] For each $M\in \Xi_v$, and $C=(q,x\underline{a}y)\in
          CONF_v$,   if there exists $[(q,a)\rightarrow(p,b,D)]\in M$ for
          some $p\in Q_T, b\in \Gamma$, and $D\in \{R,L\}$, then
          $\Upsilon(M,C)=\{[(q,a)\rightarrow(p,b,D)]\}$ else
          it is defined to be  $\emptyset$.
      \end{itemize}
      
  \end{example} 
  
  \begin{remark}
  	  The model $V$ (example~\ref{tce}) is the same model of standard Turing machine which is recalled and  expressed by our proposed syntax. Every instruction $\tau \in \mathrm{INST_v}$  is a transition of Turing machines.  
  	  
  	 \noindent Suppose $T=\langle Q, \delta, \Sigma, \Gamma, F\rangle$ be a Truing machine, then in our syntax, the set $\delta$ is a procedure in the model $V$  that its language in the model $V$ is exactly the language of the Turing machine $T$. 
  \end{remark}

   It is obvious that $V$ satisfies axiom $A1$, $A2$, and $A3$, as both $TENG_v$ and $AENG_v$ are linear time Turing computable. 
  
  By axioms $A2$ and $A3$, it is required that the engines be linear time effective computable. In section 4 and 5, we discuss that effective Computability is not restricted to Turing Computability and introduce persistently evolutionary Turing machines.

\section{Non-predetermined Functions}
 \noindent The most important and fundamental notion of mathematics is function. A function is a process associating each element $x$ of a set $X$, to a single element $f(x)$ of another set $Y$. Classically, we assumed that all functions in mathematics are pre-determined.

\noindent In this section, we discuss functions that are not pre-determined and they are eventually determined through the way we start to associate $f(x)$ for every element $x\in X$.

\noindent We introduce Persistently Evolutionary Turing machines that compute non-predetermined functions.

Let $f$ be a process that associates elements of a set $X$ to the elements of another set $Y$. If the process $f$ works well-defined then we know $f$ as a mathematical function.
But being well-defined does not force the process $f$ to be predetermined. 

Suppose that $x_1$ and $x_2$ are two different elements of $X$. I want to use the process $f$ to determine the value of $f$ for $x_1$ and $x_2$. It is up to me to first perform the process $f$ on $x_1$ or $x_2$. 

If $f$ is predetermined the it does not matter to perform the process on ordering $x_1x_2$ or ordering $x_2x_1$. But if $f$ is non-predetermined then different order of inputs  causes different \textit{alternate functions} which one of them is the function that we are constructing.

 Alternate functions are functions that could exist in place of our function (if we interacted with different ordering of inputs, those alternate could happen).

For example, consider the following process $g$:
\begin{itemize}
    \item \textit{$W$ is a set which is initially empty.}
    
    \item \textit{for a given natural number $n$, if there exists a pair $(n,z) \in W$ then output $g(n)=z$, else update $W=W\cup\{(n,|W|+1)\}$ and output $g(n)=|W|+1$.}
\end{itemize}

The function $g$ is a non-predetermined function over natural numbers. I  input $7,9,1,11$ and the process will associates the following: $g(7)=1$, $g(9)=2$, $g(1)=3$, and $g(11)=4$. The value of other numbers are yet non-predetermined and as soon as I perform process $g$ on each number the value is determined.

\begin{itemize}
    \item[-] The  function $g$ is not predetermined. It is determined eventually, but it is always  undetermined for some numbers. 
    \item[-] The function $g$ is well-defined, and associates to each input a single output.
    \item[-] For every natural number, the function $g$ is definable. 
    \item[-] If I inputted  $9,1,7,11$, I would have an alternate $g$ which would associate: $g(9)=1$, $g(1)=2$, $g(7)=3$, and $g(11)=4$.
\end{itemize}

The inspiration of non-predetermined functions are Browser choice sequences~\cite{kn:TD}. A Choice sequence  is an   unfinished objects where the value of the sequence are not necessary predetermined. 
 \begin{quote}
 	A choice sequence is begun at a particular moment in time, and then grows as we choose further numbers. This process is generally open-ended and may be continued forever.~(page~89~of~\cite{kn:BH})
 \end{quote}
 
\section{Persistently Evolutionary Turing machines}
In this section, to have a formal computation model for non-predetermined functions,  we introduce the notion of Persistently Evolutionary Turing machines (we may also name them Brouwer-Turing machines) as an effective Computable method.  

Persistently Evolutionary Turing machines are an extension of the notion of Turing machines in which the structure of the machine can evolve through each computation.

A Turing machine consists of a set of states $Q$, and a table of transitions $\delta$ which both are fixed and remain unchanged forever. In Persistently Evolutionary Turing machines, we allow the set of states and the table of transitions changes through each computation. 

As a Persistently Evolutionary Turing Machine $PT$   computes on an input string $x$, the machine $PT$ can \underline{add} or \underline{remove} some of its states and transitions, and thus after the computation on the input $x$ is completed, the sets $Q$ and $\delta$ have been changed.
  
However, these changes are persistent. That is, if we already input a string $x$ and the machine outputs $y$, then whenever we again input $x$ the machine outputs the same $y$, and the changes of states and transitions do not violate well-definedness.

One may assume that we have a box and we set a Turing machine in the box with some rules of \underline{adding} and \underline{removing} of states and transitions. Then, We input strings to the box and for each string, the box outputs a single string. The machine in the box changes itself based on the rules, however, the behavior of the box is well-defined.
 
 Persistently Evolutionary Turing Machines computes non-predetermined functions. 
 
 \begin{definition} \label{perdef} An evolutionary Turing machine, $M=\langle Q, \delta,q_0, \Sigma=\{0,1\}, \Gamma=\Sigma\cup \{\bigtriangleup\}, F\rangle$~\footnote{$Q$: the set of states, $\delta$: the transition table, $F$: the set of accepting states} consists of the following:
     \begin{itemize}
         \item \textbf{The Initial Machine}. $M_0=\langle Q_0, \delta_0, F_0\rangle$, initially $M= M_0$.
         \item \textbf{Updating Rules}. There could be a finite set of updating rules. The following is the generic rule
         \begin{itemize}
         \item If during the computation, the machine $M$ on an input, say $x$, reached to a  configuration, say $C$, with a specific property, say $\phi$, then update 
         \begin{itemize} \item $\delta= \delta \cup \{t(C)\} $ or $\delta= \delta - \{t(C)\}$,
             \item $F= F \cup \{q(C)\} $ or $F=F - \{q(C)\}$,
                 \item $Q= Q \cup \{q(C)\} $ or $Q=Q - \{q(C)\} $,
             \end{itemize}
             
              where $t(C)$,  is a generic transition, and $q(C)$ is a generic state.

          \end{itemize}
          \item \textbf{Persistent}. Updating rules never violate well-definedness, and if $M$ accepts (rejects) an string $x$, whenever we apply $M$ on $x$ in future, again $M$ would accept (reject) $x$.
     \end{itemize}
     
\end{definition}

 In the following example, we introduce a persistently evolutionary nondeterministic finite automate~\cite{kn:mom0}.   

\begin{example}\label{autool}\emph{ (In the sequel of the paper, we will refer to the persistently evolutionary machine introduced in this example by $PT_{1}$). }
    
    \noindent Define $\mathrm{Evolve}:
    \mathrm{NFA}_1\times\Sigma^*\rightarrow \mathrm{NFA}_1$ as
    follows\footnote{$\mathrm{NFA}_1$ is the class of all
        nondeterministic finite automata   $M=\langle Q, \Sigma=\{0,1\}, \delta, q_0, F \rangle$,
        where for   each  state $q\in Q$, and $a\in \Sigma$, there
        exists at most one transition from $q$ with label $a$.}:
    
    \noindent Let $M\in\mathrm{NFA}_1$, $M=\langle Q,
    q_0,\Sigma=\{0,1\},\delta:Q\times\Sigma\rightarrow Q, F\subseteq
    Q\rangle\footnote{F is the set of accepting states}$,  and $x\in \Sigma^*$. Suppose $x=a_0a_1\cdots a_k$
    where $a_i\in \Sigma$. Applying the automata $M$ on $x$, one of
    the three following cases may happen:
    \begin{itemize}
        \item[case1.] The automata $M$   reads all $a_0,a_1\cdots ,a_k$
        successfully and stops in an accepting state. In this case,  the structure of the automata does not change   and let $\mathrm{Evolve}(M,x)=M$.
        
        \item[case2.] The automata $M$   reads all $a_0,a_1\cdots ,a_k$
        successfully and stops in a state $p$ which is not an accepting
        state.
        \begin{itemize} 
            \item  If the automata $M$ can  transit from the state $p$ to an  accepting
        state by reading only \emph{one} alphabet, then  let $\mathrm{Evolve}(M,x)=M$.
         \item If it cannot transit (from $p$ to an accepting state)  then let
        $\mathrm{Evolve}(M,x)$ to be a new automata $M'=\langle Q,
        q'_0,\Sigma=\{0,1\},\delta':Q'\times\Sigma\rightarrow Q',
        F'\subseteq Q'\rangle$, where $Q'=Q$, $\delta'=\delta$,
        $F'=F\cup\{p\}$.
    \end{itemize}
        \item[case3.] The automata $M$ cannot read all $a_0,a_1\cdots ,a_k$
        successfully,and after reading a part of $x$, say $a_0a_1\cdots
        a_i$, $0\leq i\leq k$, it crashes in  a state $q$ that
        $\delta(q,a_{i+1})$ is not defined. In this case, we let $\mathrm{Evolve}(M,x)$
        be a new automata $M'=\langle Q,
        q'_0,\Sigma=\{0,1\},\delta':Q'\times\Sigma\rightarrow Q',
        F'\subseteq Q'\rangle$, where $Q'= Q\cup \{s_{i+1},s_{i+2},\cdots,
        s_k\}$ (all $s_{i+1},s_{i+2},\cdots, s_k$ are new states
        that does not belong to $Q$), $\delta'=\delta\cup
        \{(q,a_{i+1},s_{i+1}), (s_{i+1},a_{i+2},s_{i+2}),$
        $\cdots,
        (s_{k-1},a_k,s_{k})\}$, and $F'=F\cup\{ s_k\}$.
    \end{itemize}
 
\end{example}
The machine $PT_{1}$ persistently evolve, that is, if it (rejected) accepted a string $x$ already, then it would (reject) accept the string $x$ for any future trials as well. The language $L(M)$ is not predetermined and it eventually is determined.

For example, assume that initially $M$ is $Q=\{q_0\}$, $F=\emptyset$, $\delta=\emptyset$. Now I input the string $101$ and according to case~3, the machine $M$ evolves and  new states $q_1,q_2,q_3$ and transitions $(q_0,1,q_1),(q_1,0,q_2),(q_2,1,q_3)$ are added and also $F=F\cup \{q_3\}$. Now if I input the string $10$ then according to case~2, $M$ rejects it. However, If at first I inputted $10$ to the machine then it would accept it.

\subsection{Time complexity of Evolutionary Turing machines}
The time-complexity~\cite{kn:arora0} of Persistent Evolutionary Turing Machines is defined similar to the time-complexity of Turing machines except that for each (\underline{adding})  \underline{removing} of states or transitions, we count one extra clock.

\begin{definition}
    Let  $M=\langle Q, \delta,q_0, \Sigma=\{0,1\}, \Gamma=\Sigma\cup \{\bigtriangleup\}, F\rangle$ be a persistently evolutionary Turing machine, $x$ is an arbitrary string, and  $f:\mathbb{N}\rightarrow \mathbb{N}$ is a function. We say that $time_M(x) < f(|x|)$, whenever everytime we compute $M$ on $x$, the sum of
    \begin{itemize}
        \item the number of uses of transitions in $\delta$, and
        \item the number of uses of updating rules 
        
        \end{itemize} happened during the computation $M$ on $x$,  is less than $f(|x|)$.
    
    \end{definition}

\begin{proposition}\label{timece} The time complexity of the machine $PT_1$ in example~\ref{autool} is linear.
\end{proposition}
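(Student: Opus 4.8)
The plan is to analyze the computation of $PT_1$ on an arbitrary input string $x$ and bound the total count of transition-uses plus updating-rule-uses by a linear function of $|x|$. First I would observe that a single ``run'' of $PT_1$ on $x = a_0 a_1 \cdots a_k$ consists of the underlying $\mathrm{NFA}_1$ reading $x$ left to right, one transition per input symbol, so at most $|x| = k+1$ transitions are consumed before the automaton either finishes reading $x$ (cases 1 and 2) or crashes after reading a prefix $a_0 \cdots a_i$ (case 3). In every case the number of transition-uses during one computation is therefore at most $|x|$.

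Next I would bound the number of updating-rule applications during one computation on $x$. By inspection of example~\ref{autool}, case~1 performs no update, case~2 performs at most one update (adding the single state $p$ to $F$), and case~3 performs a batch of updates adding the new states $s_{i+1},\dots,s_k$ together with the transitions linking them and putting $s_k$ into $F$ — altogether at most $3(|x| - i) \le 3|x|$ elementary add-operations. Hence in all three cases the number of updating-rule uses in a single computation of $PT_1$ on $x$ is $O(|x|)$. Summing the two bounds, one computation of $PT_1$ on $x$ uses at most $c\,|x|$ steps for a fixed constant $c$ (one can take $c = 4$), which by the definition of $time_M$ preceding the proposition gives $time_{PT_1}(x) < 4|x| + 1$, i.e.\ linear time.

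The one subtlety to address carefully is that the definition of $time_M(x) < f(|x|)$ quantifies over \emph{every} time we compute $M$ on $x$, and $PT_1$'s structure evolves between runs, so I must check the bound is uniform across all reachable configurations of the machine. This is where a short argument is needed: I would note that the reading phase always consumes exactly one transition per scanned symbol regardless of how large $Q$ and $\delta$ have grown, and that the update performed on any given run is determined solely by which of the three cases applies to the \emph{current} structure on input $x$ — and in each case the update size is bounded by the expressions above, which depend only on $|x|$, not on the current size of the machine. Moreover, once $x$ (or the relevant prefixes) has been processed, future runs on $x$ fall into case~1 (or case~2 with the accepting state already present), so they perform no further structural change; thus the per-run step count for $x$ is bounded by $c|x|$ for \emph{all} runs, uniformly. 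The main obstacle is simply making this uniformity explicit rather than any hard estimate — the arithmetic itself is routine.
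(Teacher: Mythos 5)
Your proof is correct and fills in exactly the routine count that the paper leaves implicit: the paper's own proof of this proposition consists solely of the words ``It is straightforward.'' Your per-run bound (at most one transition per scanned symbol plus $O(|x|)$ update operations in each of the three cases) and your observation that this bound is uniform over all runs, as required by the ``everytime we compute $M$ on $x$'' clause of the time-complexity definition, are precisely the details one would want; there is no divergence from the paper's (absent) argument to report.
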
\begin{proof}
It is straightforward. 
\end{proof}

\subsection{Effective Computability}

In axioms $A2$ and $A3$, we required that both $\mathrm{TENG}$ and $\mathrm{AENG}$ to be effectively computable in linear time. In two following definitions, we formally explain what we mean by an effective Computability. 
 
 \begin{definition}\label{mechan}
  A function is  effectively computable if it is  either   Turing computable or   Persistently Evolutionary Turing computable. 
 \end{definition}

 \begin{definition}\label{mechan2}
     A function is computed in linear time whenever its corresponding Turing or Persistently Evolutionary Turing machine works in linear time. 
 \end{definition}

\section{  A Counter-Model } 

In this section, we prove that $\mathrm{P=NP}$ is not derivable from Axioms $A1$, $A2$ and $A3$. To do this, we construct a model $E$ which satisfies $A1$, $A2$ and $A3$ but not $\mathrm{P = NP}$.

\begin{definition}\label{Ee} We introduce a model $E$ as follows. \begin{itemize}

\item  Two sets $INST_e$ and $CONF_e$ are defined to be the same
$INST_v$ and $CONF_v$ in example~\ref{tce} respectively, and consequently the set   
  $\Xi_e$ is the
same $\Xi_v$.

\item  The transition engine $TENG_e$ is also defined similarly to the
transition engine $TENG_v$ in example~\ref{tce}, and thus it is linear time computable by a Turing machine.

\item  The accepting engine $AENG_e$ is defined as follows: let
$C\in CONF_e$ be arbitrary
\begin{itemize}
\item if $C=(h,\underline{\triangle}x)$ then $AENG_e(C)=YES$,

\item if $C=(h,x\underline{\triangle})$ then the $AENG_e$ works
exactly similar to the   the persistently evolutionary
 machine $PT_{1}$ introduced in example~\ref{autool}. On input $x$,  if $PT_1$ outputs $1$, the accepting engine
outputs $YES$, and

\item otherwise $AENG_e(C)=NO$.
\end{itemize}
By proposition~\ref{timece}, the engine $AENG_e$ is linear time computable by a persistently evolutionary Turing machine.  
\end{itemize}
\end{definition}

\begin{remark}
	 The only difference between model $E$ with model $V$ (see example~\ref{tce}) is that the $SBOX_e$ is a persistently evolutionary Turing machine.
\end{remark}

\begin{proposition}~\label{sati}
    The model $E$  satisfies axioms $A1$, $A2$, and $A3$.
\end{proposition}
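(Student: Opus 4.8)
I would dispatch $A2$ and $A3$ quickly and then concentrate on $A1$, which is the only place where the nonstandard accepting engine of $E$ actually matters. For $A2$ and $A3$: since $TENG_e$ is literally the function $TENG_v$ of example~\ref{tce}, it is computed by a Turing machine in linear time, so it is effectively computable (Definition~\ref{mechan}) and linear time (Definition~\ref{mechan2}). For $AENG_e$ I would note that on a configuration $(h,\underline{\triangle}x)$ it answers $YES$ after inspecting one cell, on a configuration $(h,x\underline{\triangle})$ it relays the output of $PT_1$, and otherwise it answers $NO$; together with Proposition~\ref{timece}, which gives that $PT_1$ is linear time, this yields (as already observed after Definition~\ref{Ee}) that $AENG_e$ is computed by a persistently evolutionary Turing machine in linear time. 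Hence $E\models A2$ and $E\models A3$.

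\noindent\textbf{Axiom $A1$.} The observation driving the argument is that $AENG_e$ differs from $AENG_v$ only on configurations of the form $(h,x\underline{\triangle})$ with $x$ nonempty, where $AENG_v$ always says $YES$ while $AENG_e$ defers to the non-predetermined machine $PT_1$; on every left-halting configuration $(h,\underline{\triangle}w)$, and on every other configuration, the two engines agree. So, given a Turing machine $T$, I would first pass to an equivalent machine $T'$ that simulates $T$ step for step, and (i) when $T$ would accept, $T'$ moves its head to the leftmost cell and only then enters the halting state $h$ — so acceptance is always witnessed by a configuration $(h,\underline{\triangle}w)$ — and (ii) when $T$ would reject, $T'$ diverges. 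Then $L(T')=L(T)$, and for $x\in L(T)$ the run of $T'$ on $x$ exceeds that of $T$ by at most one left sweep, i.e. by $O(|x|+time_T(x))$. I would then set $M:=\delta_{T'}$: since $INST_e=INST_v$ and $\delta_{T'}$ is a deterministic transition table, $M$ is a finite subset of $INST_e$ meeting the determination condition (item~5 of example~\ref{tce}), hence $M\in\Xi_e$.

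\noindent\textbf{Verification and the main obstacle.} Because $TENG_e=TENG_v$ faithfully implements Turing transitions and $\Upsilon(M,\cdot)$ selects the unique applicable one, the unique computation path of $M$ on $x$ in $E$ mirrors the run of $T'$ on $x$: it reaches a configuration with $\Upsilon(M,\cdot)=\emptyset$ exactly when $T'$ halts, and that configuration is then of the form $(h,\underline{\triangle}w)$, on which $AENG_e$ answers $YES$ exactly as $AENG_v$ would; on a non-halting or a non-left-halting dead end the answer is $NO$, as it should be. Hence $x$ lies in the language of $M$ taken in $E$ iff $T'$ accepts $x$ iff $x\in L(T)$, and $time_M(x)=time_{T'}(x)$ for $x\in L(T)$. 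The main obstacle is the clause of $A1$ requiring the time complexity of $L(M)$ to \emph{equal} that of $L(T)$: the left sweep in $T'$ introduces an $O(|x|+time_T(x))$ overhead, so the functions $time_M$ and $time_T$ need not be literally equal. I would handle this exactly as one handles the usual linear/polynomial robustness of Turing-machine time complexity — the overhead preserves membership in $\mathrm{P}$ and in $\mathrm{NP}$, which is the content of $A1$ that the later non-derivability argument for $\mathrm{P}=\mathrm{NP}$ relies on — and state $A1$ with that reading in mind. The rest is a routine check that $E$ is, apart from $AENG_e$, a copy of the standard Turing-machine model $V$ of example~\ref{tce}, for which $A1$ already holds; this completes the proof.
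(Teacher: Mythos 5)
Your proposal is correct, but it is doing substantially more work than the paper, whose entire proof of Proposition~\ref{sati} is the single sentence ``It is obvious by definition.'' For $A2$ and $A3$ the two arguments coincide in substance: the paper's Definition~\ref{Ee} already records that $TENG_e$ is linear-time Turing computable and (via Proposition~\ref{timece}) that $AENG_e$ is linear-time persistently-evolutionary computable, which is exactly your quick dispatch. The genuine divergence is on $A1$, where you correctly identify that the claim is \emph{not} obvious in $E$: a Turing machine $T$ that halts with its head at the right end of the tape would, under the naive embedding $M=\delta_T$, have its acceptance on such inputs decided by the non-predetermined machine $PT_1$, so $L(M)=L(T)$ could fail. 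Your repair --- normalizing $T$ to a machine $T'$ that sweeps left before entering $h$, so that every accepting halt is witnessed by a configuration $(h,\underline{\triangle}w)$ on which $AENG_e$ and $AENG_v$ agree, and diverging on rejection --- is exactly the missing content, and your verification that the resulting $M\in\Xi_e$ satisfies the determination condition and reproduces $L(T)$ is sound. The one residual wrinkle you flag yourself: the left sweep costs $O(|x|+time_T(x))$ extra steps, so the literal ``time complexity \emph{equal}'' clause of $A1$ only holds up to the usual linear-overhead robustness of time classes; the paper never confronts this because it never unpacks $A1$ at all. In short, your route buys an actual proof of the one nontrivial clause, at the price of having to read $A1$'s equality of time complexities as equality up to the overhead that preserves $\mathrm{P}$ and $\mathrm{NP}$.
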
\begin{proof}
It is obvious by definition. 
\end{proof}

  Note that the $AENG_e$    is
a persistently evolutionary machine. The set of  procedures (algorithms) in the model
$E$ is the same set of  procedures in the model
$V$ (example~\ref{tce}), i.e $\Xi_v=\Xi_e$. However for some procedures, say $M$, the language $L(M)$ is the model $E$ could be different from the language $L(M)$ in $V$. For some $M\in \Xi_e$, we have $L(M)$ is a non-predetermined language. The procedure $M$ is fixed and does not change through time, but since the structure of  $AENG_e$   changes through time, the language $L(M)$ is non-predetermined.

\begin{definition}
We say a function $f:\mathbb{N}\rightarrow \mathbb{N}$ is
sub-exponential, whenever there exists $t\in \mathbb{N}$ such that
for all $n>t$, $f(n)<2^n$.
\end{definition}

\begin{theorem}\label{notmod} % \textbf{(GV)}. 
We have
\begin{center}
    $E\models \mathrm{P  \neq NP }$.
\end{center}
We show that there exists a  procedure $M\in \Xi_e$ such that

\begin{itemize}
\item the language $L(M)$ that the we  
compute  through $M$ is not predetermined,

\item the language $L(M)$  belongs to the class
$\mathrm{P}$, 

\item there exists no  procedure $M'\in \Xi_e$, such that
 $L(M')$  is equal to $$L'= \{x\in \Sigma^*\mid\exists y
(|y|=|x|\wedge y\in L(M))\},$$ and for some $k\in \mathbb{N}$, for
all $x\in L(M')$, if $|x|>k$ then
\begin{center}
$time_M(x)\leq f(|x|)$
\end{center}
where $f:\mathbb{N}\rightarrow \mathrm{N}$ is a sub-exponential
function. In other world, $L'$ is in $\mathrm{NP}$ but not in $\mathrm{P}$.

\end{itemize}
\end{theorem}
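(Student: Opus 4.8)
My plan is to exhibit the witness procedure $M$ explicitly, read off the first two bullets at once, and then refute the third by an adversarial argument on the order in which the model $E$ is exercised. For $M$ I would take the procedure in $\Xi_e=\Xi_v$ that, run from $C_{0,x}=(q_0,\underline{\triangle}x)$, shifts the tape one cell left to kill the leading blank, sweeps to the rightmost blank in state $h$, and carries no instruction firing from $h$; this runs in linear time and halts at $(h,x\,\underline{\triangle})$. Since that configuration has its head in state $h$ on the rightmost blank, $\mathrm{AENG}_e$ is consulted in its second clause, i.e.\ the persistently evolutionary machine $PT_1$ of Example~\ref{autool} is run on $x$; hence $x\in L(M)\iff PT_1$ accepts $x$, so $L(M)$ coincides with the non-predetermined language $L(PT_1)$ of Example~\ref{autool} --- the first bullet. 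As $M$ runs in linear time and, by Proposition~\ref{timece}, $\mathrm{AENG}_e$ runs in linear time, $L(M)\in\mathrm{P}$ --- the second bullet. And $L'\in\mathrm{NP}$: take $J=\{(x,y):|y|=|x|\wedge y\in L(M)\}$, which lies in $\mathrm{P}$ (a length check, then run $M$ on $y$), and note $x\in L'\iff\exists y\,(|y|\le|x|\wedge(x,y)\in J)$, so $q(m)=m$ witnesses membership in $\mathrm{NP}$.

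Two facts about $E$ drive the remaining claim. First, $L'$ is all-or-nothing per length: at any stage, $L'\cap\Sigma^{n}$ is $\Sigma^{n}$ if $PT_1$ has by then accepted some word of length $n$ and is $\emptyset$ otherwise; and from the clauses defining $\mathrm{Evolve}$ in Example~\ref{autool}, (i) a word can enter $L(PT_1)$ only by being submitted to $PT_1$, and once $PT_1$ has delivered a verdict on a word that verdict is permanent; (ii) a word that on submission crashes at once --- e.g.\ one differing in its first symbol from every path then present --- is accepted; and (iii) a word that on submission is read in full to a non-accepting state having a one-step transition to an accepting state is (permanently) rejected. Second, the computation path of any procedure on any input is fixed by $\mathrm{TENG}_e$ alone; only the final verdict, taken at the halting configuration, depends on $\mathrm{AENG}_e$ and hence on the current state of $PT_1$. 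So, to refute that a given $M'$ computes $L'$ it suffices to produce some finite sequence of procedure runs --- on $M'$ and on auxiliary procedures --- after which $L(M')\neq L'$, and along such a sequence $PT_1$ can be made to accept or reject any chosen binary word on demand (run an auxiliary procedure that writes that word and halts in state $h$ on the rightmost blank).

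Now suppose for contradiction that $M'\in\Xi_e$ has $L(M')=L'$ under every evolution of $E$ and $time_{M'}(x)\le f(|x|)$ for all $x\in L(M')$ with $|x|>k$, $f$ sub-exponential. Fix $n>k$ large and inspect the already-determined computation of $M'$ on $z_0=0^{n}$. If its verdict on $z_0$ is already permanent --- which is the case unless the computation halts at a configuration $(h,w_0\,\underline{\triangle})$ whose tape word $w_0$ is a nonempty binary string, the one shape that triggers a $PT_1$-consultation --- then we are done: if the permanent verdict is NO, submit any word of length $n$ to $PT_1$, so that $z_0\in L'\setminus L(M')$; if it is YES, leave $PT_1$ empty, so that $z_0\in L(M')\setminus L'$. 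So assume $M'$ on $z_0$ halts at $(h,w_0\,\underline{\triangle})$ with $w_0$ nonempty binary (the time bound makes $|w_0|$ sub-exponential, though only its finiteness is used). I would then (a) first submit the one-symbol extension $w_0 0$ to $PT_1$, which --- the automaton being empty --- lays a fresh path along $w_0 0$ and makes $w_0$ a proper prefix of it, so by (iii) a permanent reject; (b) next submit a word $v$ of length $n$ whose first symbol differs from that of $w_0 0$, which by (ii) is accepted, so $z_0$ enters $L'$ while $w_0$ stays a proper prefix of $w_0 0$ and still a permanent reject; (c) finally run $M'$ on $z_0$, whose sole $PT_1$-consultation is on $w_0$ and returns reject, so $M'$ rejects $z_0$, contradicting $z_0\in L'=L(M')$. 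Hence no such $M'$ exists: $L'\notin\mathrm{P}$ while $L'\in\mathrm{NP}$, i.e.\ $E\models\mathrm{P}\neq\mathrm{NP}$.

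The step I expect to need genuine care is (a)--(b): one must check, against the precise case split of $\mathrm{Evolve}$ in Example~\ref{autool}, that submitting $w_0 0$ really turns $w_0$ into a permanently rejected proper prefix (rather than, via case~2, promoting an intermediate state to accepting), that the later submission of $v$ neither un-poisons $w_0$ nor happens to be rejected itself, and that tape words carrying blanks --- on which $\mathrm{AENG}_e$'s behaviour is a convention left implicit in the paper --- are accommodated, either by the degenerate analysis or by the same argument with $\mathrm{AENG}_e$'s reading of a non-binary word in place of $PT_1$. If one insists instead on a refutation using only runs of $M'$, a counting argument over the $2^{n}$ inputs of length $n$, each contributing a single $PT_1$-consultation, is needed, and there the sub-exponential clock becomes essential. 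Finally, I would stress that the argument is Brouwerian rather than diagonal: the finitely many verdicts that $M'$ has irrevocably committed to form a bar past which the adversary is free to keep extending the choice-sequence-like object $PT_1$ in whatever direction falsifies $M'$; the sub-exponential bound only guarantees that, at length $n$, this committed part can touch fewer than $2^{n}$ words of length $n$ and so cannot fence in all of $\Sigma^{n}$ --- exactly the place where a uniform write-up would invoke Bar induction in place of diagonalisation.
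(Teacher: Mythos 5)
Your construction of the witness $M$, the language $L'$, and the $\mathrm{NP}$-membership argument coincide with the paper's, and your core device for refuting $L'\in\mathrm{P}$ is the same one the paper uses: exploit the clauses of $\mathrm{Evolve}$ in Example~\ref{autool} so that accepting $v0$ first forces $v$ to be permanently rejected, while a word crashing at its first symbol is forcibly accepted, and then invoke ``free will'' over the order of runs to make $L(M')$ and $L'$ disagree after $M'$ has committed. Where you genuinely depart from the paper is in how the refutation is organized. The paper fixes $w$ with $|w|>\max(m_1,k)$, splits on whether $S(w)$ is empty, pre-poisons all of $E(w)\cup D(w)$, and in the final sub-case needs the counting bound $|H(w)|\leq f(|w|)<2^{|w|}$ to find an untouched word of length $|w|$. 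You instead observe that the computation path of $M'$ on $z_0=0^n$ is fixed by $\mathrm{TENG}_e$ alone and that, under Definition~\ref{proced}(ii), $\mathrm{AENG}_e$ is consulted only at the halting configuration, so there is a single relevant word $w_0$; poisoning $w_0$ via $w_0 0$ and blessing one fresh $v$ of length $n$ then yields the contradiction in three steps. This is leaner, needs only finiteness of the path rather than sub-exponentiality (as you note), and proves the stronger claim that no procedure at all decides $L'$; the paper's version is the one to fall back on if one insists that every configuration $(h,x\underline{\triangle})$ along the path feeds $PT_1$, which is evidently the reading behind its sets $S(y)$ and $H(y)$.

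Two points need tightening. First, in your ``permanent verdict YES'' sub-case, merely leaving $PT_1$ untouched does not make $z_0\notin L'$ --- it leaves $\exists y(|y|=n\wedge y\in L(M))$ undetermined; to separate $L(M')$ from $L'$ there you must do what the paper does, namely submit $v0$ for every $v\in\Sigma^n$ so that every word of length $n$ becomes permanently rejected. Second, you assume $PT_1$ is empty when the adversarial phase begins, whereas the paper only controls $m_1$, the maximal length of strings \emph{accepted} so far; if $w_0$ happens to have been accepted in $PT_1$'s prior history, step (a) does not poison it and the contradiction evaporates. Both your argument and the paper's footnoted claim about $m_1$ are repaired by choosing $n$ larger than the length of every string ever \emph{submitted} to $PT_1$, not merely every string accepted, so this is a shared and fixable blemish rather than a defect peculiar to your route. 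The residual foundational vagueness --- what equality of two non-predetermined languages means --- you inherit from the theorem itself, and your explicit adversarial reading and the closing remark relating the argument to bar induction rather than diagonalisation match the paper's stated intent.
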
\begin{proof}
 Consider the following procedure $M\in \Xi_e$
\begin{itemize} \item[] $\Sigma=\{0,1\},\Gamma=\{0,1,\triangle\},$

\item[]$M=\{[(q_0,\triangle)\rightarrow(h,\triangle,R)],
[(h,0)\rightarrow(h,0,R)],[(h,1)\rightarrow(h,1,R)]\}$.
\end{itemize} 
Using the $TENG_e$ and $AENG_e$, we compute procedure $M$ on an input $x$  as follows:
\begin{itemize}
	\item[-] Note $C_{0,x}=(q_0,\underline{\triangle} x)$, and   $\Upsilon(M,C_{0,x})=[(q_0,\triangle)\rightarrow(h,\triangle,R)]$.
	\item[-] We have $TENG_e(C_{0,x},[(q_0,\triangle)\rightarrow(h,\triangle,R)])=(h,\triangle\underline{a} x')$ where $x=ax'$ for some $a\in Sigma$.  
	\item[-] Continuing using the transition engine on the configurations, we reach to the configuration $(h,x\underline{\triangle})$ which $\Upsilon(M,(h,x\underline{\triangle}))=\emptyset$.
	\item[-] Running the accepting engine, $AENG_e(M,(h,x\underline{\triangle}))$, the persistently evolutionary NFA inside the accepting engine works and provide 'Yes' or 'No' as output.
\end{itemize}
The language of the   procedure $M$, $L(M)$, is not predetermined in model $E$. As   we  choose  a string $x\in
\Sigma^*$ to compute whether $x$ is an element of   $L(M)$, during the computation, the inner structure of the $AENG_e$ may evolve, and depending on the ordering of  strings, says $x_1,x_2,...$, that   we choose  to compute whether $x_i\in L(M)$, the language $L(M)$ eventually is determined.

 It is obvious that the language  $L(M)$ belongs to $\mathrm{P}$ 
(see the definition of time complexity in definition~\ref{proced}).

 Let $L'=\{x\in \Sigma^*\mid\exists y
(|y|=|x|\wedge y\in L(M))\}$. It is again obvious that $L'$
belongs to $\mathrm{NP}$ by definition~\ref{proced}.

We prove that $L'$ cannot belong to $\mathrm{P}$. Assume that $L'\in\mathrm{P}$ and thus
 there  exists  a  procedure $M'\in \Xi_e$ that we can compute $L'$ by $M'$ in time complexity less than a
sub-exponential function $f$. Then for some $k\in \mathbb{N}$, for
all $x$ with length greater than $k$, $x$ belongs to $L'$
whenever\begin{itemize} \item[] we construct  a
successful computation path $C_{0,x}C_{1,x},...,C_{n,x}$ of the
 procedure $M'$ on $x$, for some $n\leq
f(|x|)$.\end{itemize}
 Let $m_1\in \mathbb{N}$ be the maximum length of those
strings $y\in \Sigma^*$ that   \underline{until now} are accepted
by the persistently evolutionary   machine $PT_1$ (see example~\ref{autool}) which is
inside the $AENG_e$\footnote{Note that, since $m_1$ is the maximum length of those
	strings $y\in \Sigma^*$ that   \underline{until now} are accepted
	by the persistently evolutionary   machine $PT_1$, if we start to compute $PT_1$ for an arbitrary string, say $z$, with length greater than $m_1$ then $PT_1$ accepts $z$.}. 
\begin{center}
	Define $m=\max(m_1,k)$.
\end{center}

For every $y\in \Sigma^*$, let
$path(y):=C_{0,y}C_{1,y},...,C_{f(|y|),y}$ be the computational
path of  the procedure $M'$ on the string $y$. The $path(y)$ is  generated by the transition  engine (note that it is possible that the length of $path(y)$ is less that $f(|y|)$).  
%Note that $TENG_e$ is
%a static machine and does not evolve, thus %$path(y)$ is
%independent of the behavior
 Let
\begin{center}
$S(y)=\{C_{j,y}\mid C_{j,y}\in path(y) \wedge \exists x\in
\Sigma^*(C_{i,y}=(h,x\underline{\triangle} ))\}$
\end{center}and
\begin{center}
$H(y)=\{x\in \Sigma^*\mid \exists C_{j,y}\in path(y)
(C_{j,y}=(h,x\underline{\triangle} ))\}$
\end{center}
The set $S(y)$ is the set of all configurations that the accepting engine on them runs its persistently evolutionary NFA, $PT_1$, during the computation $M'$ on the input string $y$. If $S(y)$ is empty then it means that  the computation $M'$ on $y$ does not force the structure of accepting engine, $AENG_e$, to  evolve. 

The set $H(y)$ is the set of all strings, say $z$,  that during the computation $M'$ on $y$, the persistently evolutionary NFA, $PT_1$ inside the accepting engine, works on $z$ as an input.

We refer by $|H(y)|$ to the number of elements of $H(y)$, and obviously we have
$|H(y)|\leq f(|y|)$ for $y$ with $|y|>k$. 

Also let $E(y)=H(y)\cap \{x\in
\Sigma^*\mid |x|=|y|\}$, and $D(y)=H(y)\cap \{x\in \Sigma^*\mid
|x|=|y|+2\}$.

\begin{center}
    \end{center}

\noindent Let $w\in \Sigma^*$ with $|w|>m$ be arbitrary. Two cases are possible:
either $S(w)=\emptyset$ or $S(w)\neq \emptyset$.

\underline{Consider the first case}. $S(w)=\emptyset$.

  Since the set $S(w)$ is empty, the
execution of $M'$ on $w$ does not make the $AENG_e$  evolve, and the value
$m_1$ remains unchanged. Here, we have also two cases:
\begin{itemize}
    \item[1)] If, using $TENG_e$ and $AENG_e$, we compute that $w$ is a member of $L(M')=L'$ then by definition of $L'$,  
    there must exist a string, say $v\in \Sigma^*$, such that
    
    \begin{center}
         $|v|=|w|$ and
        $v\in L(M)$ $(*)$.
    \end{center}  
     But is a contradiction. 
    \begin{itemize} 
        \item[(i)] Since we have the free will~\footnote{By free will, we mean that we are not forced to use $\mathrm{TENG}$ and $\mathrm{AENG}$ in any specific ordering.}, we first start to compute  procedure $M$ on all strings in $\Sigma^*$
        with length $|v|+1$ sequentially. As the length of $v$ is greater
        than $m$, all strings with length $|v|+1$ are accepted by the
        persistently evolutionary Turing machine $PT_1$ (see item-3 of
        example~\ref{autool}) which is inside  $AENG_e$. 
        
        \item[(ii)]    Then we   compute that whether $v$ is $L(M)$. But because of the evolution of $AENG_e$ happened in part (i), the  $AENG_e$ on computation of $M$ on $v$ outputs $NO$, and  thus
        $v$ is not an element of    $L(M)$ (see the item-2 of
        example~\ref{autool}). So $v\not\in L(M)$, and it contradicts with $(*)$.
    \end{itemize}

    \item[2)] If, using $TENG_e$ and $AENG_e$, we compute  that $w\not\in L(M')=L'$ then by definition of $L'$,   for all strings $v\in \Sigma^*$, $|v|=|w|$, we have $v\not\in
    L(M)$. But it contradicts with  the free will 
    again. As the length $w$ is greater than $m$, we  may
    choose a string $z$ with $|z|=|w|$ and by the item-3 of
    example~\ref{autool}, we have $z\in L(M)$, contradiction.
\end{itemize}

\underline{Consider the second case}. $S(w)\neq\emptyset$.

 Suppose that we, before computing $M'$ on $w$,   start to compute the procedure  $M$ on all strings $v0$'s, for all  $v\in E(w)$,
and then   compute procedure
$M$ on all strings $v0$'s, for all $v\in D(w)$ respectively.

 Since
$|w|>m$, we   have $u0\in L(M)$ for all $u\in
E(w)\cup D(w)$, and  $AENG_e$   evolves through computing $M$ on $u0$'s. It  evolves
in the way that   $AENG_e$ outputs $No$ for all
configuration in 
$$\{C_{i,w}\in S(w)\mid \exists x\in E(w)\cup
D(w)~s.t.~C_{i,w}=(h,x\underline{\triangle})\}.$$

 After that, we start to compute $M'$ on $w$. 
 Either we
finds $w\in L(M')$ or $w\not\in L(M')$.

\begin{itemize}
    \item  Suppose the first case
    happens and $w\in L(M')=L'$. It contradicts with the free will of us. We compute  $M$
    on all strings $v0$, $|v|=|w|$ sequentially, and would make $\{v0\in
    \Sigma^*\mid |v|=|w|\}\subseteq L(M)$. Then the $AENG_e$
    evolves in the way that, it will output $No$ for all
    configurations $(h,v\underline{\triangle})$, $|v|=|w|$, and   thus
    there would exist no $v\in L(M)\cap\{x\in \Sigma^*\mid |x|=|w|\}$ which implies $w\not\in L(M')$, contradiction.
    
    \item 
    Suppose the second case happens and $w\not\in L(M')=L'$. Since
    $|H(w)|<f(|w|)< 2^{|w|}$, during the computation of $M'$ on $w$,
    only $f(|w|)$ numbers of configurations  of the form
    $(h,x\underline{\triangle})$, $x\in\{v0\mid |v|=|w|\}\cup \{v1\mid
    |v|=|w|\}$ are given as input to the $AENG_e$. Therefore
    there exists a string $z\in \{x\in \Sigma^*\mid |x|=|w|\}$ such
    that none of its successors have been input to the persistently
    evolutionary Turing machine $PT_1$, and if we choose
    $z$ and computes $M$ on it, then $z\in L(M)$ which implies $w \in L'$. Contradiction.
\end{itemize}

 We showed that $L'$ cannot be computed by any $M'$ that its time complexity is less than a sub-exponential function. Thus $L'$ does not belong to the class $\mathrm{P}$. But because of the procedure $M$, we have $L'$ belongs to $\mathrm{NP}$ and therefore 
 
 \begin{center}
     $E\models \mathrm{P\neq NP}$.
 \end{center}
 
\end{proof}

The  above theorem simply says that if $L'$ belongs to $\mathrm{NP}$  then it forces us  to
 interact with $TENG_e$ and $AENG_e$   in some certain orders, which conflicts with our free will.
 
 \begin{theorem} ~
     
 \begin{center}
  $\mathrm{\{A1,A2,A3\}\not\vdash P=NP}$.
 \end{center}
\begin{proof}
It is a consequence of proposition~\ref{sati} and theorem~\ref{notmod}.
\end{proof}
 \end{theorem}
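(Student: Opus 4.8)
The plan is to obtain the statement directly from the semantic facts already in hand, via soundness of the underlying proof system. First I would make precise the calculus behind $\vdash$: the axioms $A1$, $A2$, $A3$ are sentences in whatever language we use to speak about $\mathrm{INST}$, $\mathrm{CONF}$, $\mathrm{TENG}$, $\mathrm{AENG}$, procedures, languages and the classes $\mathrm{P}$, $\mathrm{NP}$, and $\vdash$ is ordinary logical derivability in that language. The only general principle needed is soundness: if $\{A1,A2,A3\}\vdash \mathrm{P=NP}$, then every structure interpreting this language and satisfying $A1$, $A2$, $A3$ also satisfies $\mathrm{P=NP}$.

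Second, I would invoke Proposition~\ref{sati}, which states that the structure $E$ of Definition~\ref{Ee} satisfies $A1$, $A2$ and $A3$. Here one should confirm that $E$ is genuinely a structure for the language in question — in particular that the persistently evolutionary accepting engine $AENG_e$ is a well-defined total function, which is exactly what the persistence clause of Definition~\ref{perdef} together with Example~\ref{autool} guarantees.

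Third, I would invoke Theorem~\ref{notmod}, which gives $E\models \mathrm{P\neq NP}$. Since $\mathrm{P\neq NP}$ is the negation of $\mathrm{P=NP}$, this yields $E\not\models \mathrm{P=NP}$. Combining with step one: from $\{A1,A2,A3\}\vdash\mathrm{P=NP}$ we would get $E\models\mathrm{P=NP}$, contradicting $E\not\models\mathrm{P=NP}$. Hence $\{A1,A2,A3\}\not\vdash\mathrm{P=NP}$, which is the claim.

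The substantive work is already done — the construction of $E$ and Theorem~\ref{notmod}; this last step is a one-line appeal to soundness. The only point requiring care in the writeup is that the notion of model and of satisfaction used in Proposition~\ref{sati} and Theorem~\ref{notmod} is the very one with respect to which $\vdash$ is sound; because $E$ involves a non-predetermined, time-dependent component ($AENG_e$), I would make explicit in what sense ``$E\models\varphi$'' is meant and check that $A1$, $A2$, $A3$ and $\mathrm{P=NP}$ are all evaluated in $E$ under that same reading. Once this is pinned down, the argument closes immediately.
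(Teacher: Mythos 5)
Your proposal is correct and follows essentially the same route as the paper: the paper's proof is precisely the one-line appeal to Proposition~\ref{sati} ($E\models A1,A2,A3$) and Theorem~\ref{notmod} ($E\models \mathrm{P\neq NP}$), with soundness of $\vdash$ left implicit. Your additional care about fixing the language, the proof calculus, and the sense of ``$\models$'' for the time-dependent model $E$ is a reasonable elaboration of what the paper takes for granted, not a different argument.
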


% \begin{thebibliography}{10}
 % \end{thebibliography}

\section{Natural Computation} \label{NCo}
    
  For every mathematician, It is obvious that the set of ``natural numbers" is different from ``Peano axioms". In the same way, we can talk of ``natural computation" and our axiomatization setting.  
  
  So, one may ask
  \begin{itemize}\item  How much our setting with 3 axioms expresses the ``natural computation"?\item How the ``natural transition engine" of the reality works?
\item How the ``natural accepting engine" works?

\item Is our axiomatic system plausibly formalize the ``natural computation"?
  \end{itemize}
  The Church-Turing thesis states that
  \begin{quote}
  a function on the natural numbers can be calculated by an effective method, if and only if it is computable by a Turing machine.
  \end{quote}  
 
\noindent If we want to recall the Church-Turing thesis in our setting, it says
 \begin{quote}
     a function on the natural numbers can be calculated by an effective method, if and only if it is computable by a procedure $M$ in $\Xi$.
 \end{quote}

When we perform a computation, we transit from a configuration to another configuration (using $\mathrm{TENG}$ of the reality) and also, we check whether a configuration is accepted or not (using $\mathrm{AENG}$ of the reality).

We do not know what is the inner structure of  $\mathrm{TENG}$ and  $\mathrm{AENG}$ of the reality, but we believe that  both $\mathrm{TENG}$ and  $\mathrm{AENG}$ are physically plausible, and thus
 
\begin{itemize}
    \item[1.] both $\mathrm{TENG}$ and  $\mathrm{AENG}$ of the reality are effectively computable, and
    
    \item[2.] both $\mathrm{TENG}$ and  $\mathrm{AENG}$ work  in linear time.
\end{itemize}
We state these two properties in axioms $A2$ and $A3$.
   We, inhabitants of reality, can never find out whether the reality persistently evolves or not.   We can never discover that whether the $\mathrm{TENG}$ and $\mathrm{AENG}$  of the reality is a Turing machine or a Persistently Evolutionary Turing machine. 
   
   We believe that our setting and 3 axioms, plausibly formalize the ``natural computation" similar to Peano axioms for natural numbers.
   
   We cannot derive $\mathrm{ P = NP}$ from our 3 axioms which forces us to consider the engines of the reality to effectively compute in linear time.
   
   \section{Conclusion}
 We proposed an axiomatic system for ``natural computation". We justified that our axioms plausibly describe the ``natural computation"  similar to Peano axioms for natural numbers. We  show that $\mathrm{ P= NP}$ is not derivable from our axioms.
 
 We also show that regarding   "effective computability" from Brouwer's intuitionism viewpoint,  $\mathrm{ P \neq NP}$.
 
 \begin{itemize}
     \item []
     
 \end{itemize}

\noindent \textbf{Acknowledgment}. I would like to thank Prof. Amir Daneshgar for his valuable comments.
%\begin{thebibliography}{10}
%\end{thebibliography}

\end{document}